\let\ALP \mathcal 
\newcommand{\mc}{\mathcal}
\newcommand{\colspan}[1]{\texttt{ColSpan}(#1)}
\newcommand{\rank}[1]{\texttt{rank}({#1})}
\newcommand{\transpose}{\mathsf{T}}
\renewcommand{\vec}[1]{\boldsymbol{\mathbf{#1}}} 
\renewcommand{\Re}{\mathbb{R}}
\newcommand{\vct}[1]{\boldsymbol{\mathbf{#1}}} 
\newcommand{\Q}{\mathbb{Q}}
\newcommand{\bigO}{\mathcal{O}}
\newcommand{\ene}[1]{\vec{\bar{#1}}}
\DeclarePairedDelimiter\abs{\lvert}{\rvert}%
\DeclarePairedDelimiter\norm{\lVert}{\rVert}%
\let\oldabs\abs
\def\abs{\@ifstar{\oldabs}{\oldabs*}}
\let\oldnorm\norm
\def\norm{\@ifstar{\oldnorm}{\oldnorm*}}
\begin{document}
\title{Two Algorithms for Computing Exact and Approximate Nash Equilibria in Bimatrix Games}
\titlerunning{Two Algorithms for Computing Exact and Approximate NE}

 \author{Jianzong Pi\inst{1}\orcidID{0000-0003-4603-7805} \and
 Joseph L. Heyman\inst{2}\orcidID{0000-0002-1898-7523}\thanks{The second author was fully supported by the United States Military Academy and the Army Advanced Civil Schooling (ACS) program. The views expressed in this work are those of the authors and do not reflect the official policy or position of the Department of the Army, Department of Defense, or the U.S. Government} \and
Abhishek Gupta\inst{1}\orcidID{0000-0003-1117-325X}\thanks{The third author gratefully acknowledges support from NSF Grant 1565487.}}
    \authorrunning{J. Pi \and J. L. Heyman \and A. Gupta}
    \institute{The Ohio State University, Columbus OH 43210, USA
    \email{\{pi.35,gupta.706\}@osu.edu} \and 
    United States Military Academy, West Point NY 10996, USA
    \email{joseph.heyman@westpoint.edu}}
 %

\begin{titlepage}

\maketitle
\begin{abstract}
In this paper, we first devise two algorithms to determine whether or not a bimatrix game has a strategically equivalent zero-sum game. If so, we propose an algorithm that computes the strategically equivalent zero-sum game. If a given bimatrix game is not strategically equivalent to a zero-sum game, we then propose an approach to compute a zero-sum game whose saddle-point equilibrium can be mapped to a well-supported approximate Nash equilibrium of the original game. We conduct extensive numerical simulation to establish the efficacy of the two algorithms. 
\end{abstract}

\end{titlepage}

\section{Introduction}
Non-cooperative game theory has become a popular method for modeling strategic interactions between decision makers, commonly referred to as ``players''. While originally applied to strategic interactions in games and economics, game theory is now gaining popularity in the fields of social and political sciences. Today, it is not uncommon to find game theory applied in diverse fields such as traffic engineering \cite{jiang2009cooperative,gong2013reducing,zhao2013load}, online advertising  \cite{dalessandro2012causally,choi2019learning}, cyber-security \cite{shiva2010game,do2017game}, and many others.

While game theory has been a widely used modeling technique, efficient computation of a solution to a game has been difficult. In 1950, John Nash defined Nash equilibrium (NE) as a solution concept in non-cooperative games, in which each player's expected payoff is maximized with the knowledge of other players' strategies. Moreover, each player will receive less expected payoff if he/she deviates from the NE. Nash proved that there exists an equilibrium in every finite-action finite-player non-cooperative game \cite{nash1951} using Brouwer's Fixed Point Theorem. However, there is currently no known efficient algorithm for computing NE in general non zero-sum games, and it has remained an open problem for over 70 years. In a series of works, \cite{daskalakis2005three,daskalakis2009complexity}, Daskalakis et al. showed that finite games with more than three players is Polynomial Parity Arguments on Directed graphs (PPAD)-complete. While the computation of NE may seem simpler in two player games, Chen et al. showed the hardness results for this case a short time later \cite{chen2009}.

With these hardness results established, one could hope to come up with an efficient algorithm to compute an approximate NE, which is also referred to as an $\varepsilon$-NE in two player games. For an $\varepsilon$-NE solution, any deviations from the $\varepsilon$-NE can gain either player at most an additional payoff of $\varepsilon$. In \cite{lipton2003}, Lipton et al. proposed a quasipolynomial time algorithm for computing an approximate NE for any fixed $\varepsilon$. The current ``best" polynomial time algorithm for a fixed $\varepsilon$ is due to \cite{tsaknakis2008optimization}, where Tsaknakis and Spirakis proposed an algorithm for $\varepsilon = 0.3393$. However, computation of $\varepsilon$-NE is still PPAD-complete if $\varepsilon$ is inversely polynomial in the size of the game \cite{chen2006computing}.

A more demanding notion of an approximate solution is the $\varepsilon$-well supported approximate Nash
equilibrium ($\varepsilon$-WSNE) \cite{daskalakis2009complexity}. In an $\varepsilon$-WSNE, players only place positive probability on strategies that have a payoff within $\varepsilon$ of the pure best response. $\varepsilon$-WSNE is a more restrictive approximation than $\varepsilon$-NE, as every $\varepsilon$-WSNE is an $\varepsilon$-NE, while the converse is not true \cite{daskalakis2009complexity}.

There is significantly less literature studying $\varepsilon$-WSNE compared to $\varepsilon$-NE. For the case of $\varepsilon$-WSNE, the first and most well-known polynomial time algorithm for a fixed $\varepsilon$ is $\frac{2}{3}$-WSNE as published in \cite{kontogiannis2010well}. Fearney et al. \cite{fearnley2016approximate} made  improvement to the previous algorithm, which resulted in $\varepsilon=\frac{2}{3}-0.005913759$. For random bimatrix games, Panagopoulo and Spirakis \cite{panagopoulou2014random} found that the uniform mixed strategy profile is, with high probability, a $\sqrt{\frac{3\ln{n}}{n}}$-WSNE.

In our work, we first devise an efficient algorithm for computing strategically equivalent zero-sum games using simple algebraic manipulations. Then, we propose a polynomial time algorithm for computing an $\varepsilon$-WSNE. To determine the latter algorithm, we define a certain affine game transformation to that leads to a simple linear program that outputs a zero-sum game that is ``close" to the original nonzero-sum game. We show that any NE of this zero-sum game is an $\varepsilon$-WSNE of the original game.

\subsection{Notation}
In this paper, all vectors are column vectors and are written in bold font. We denote $\vec{1}_n$ and $\vec{0}_n$ as the ones and zeros vector of length $n$ and denote $e_j$, $j\in\{1,2,...,n\}$, as a vector with 1 at the $j$\ts{th} position and $0$'s elsewhere. We denote $\Delta_n \subset \mathbb{R}^n$ as the set of probability distributions over $\{1,..., n\}$, i.e., $\Delta_{n}=\big\{\mathbf{p} \mid p_i\geq0,\forall i\in \{1,\dots,n\},\sum_{i=1}^n p_i = 1  \big\}$. For a matrix $A \in \mathbb{R}^{m\times n}$, we use $a_{i,j}$ to denote the entry on $i$-th row and $j$-th column of $A$. Moreover, we define $A_{(i)}$ as the $i$-th row of $A$, and $A^{(j)}$ as the $j$-th column of $A$. We define the max norm of matrix $A$ as $\|A\|_{\max} = \max_{i,j}|a_{i,j}|$. Define $\colspan{A}$ as the subspace spanned by the columns of matrix $A$. 
We define 
$\ALP D_n \subseteq \mathbb{R}^{n\times n}$
as the set of all diagonal matrices with $n$ rows and $n$ columns with positive diagonal entries. For a diagonal matrix $D \in \ALP D_n$, we use $d_{j}$ to denote the $(j,j)$-th element in matrix $D$ and note that $d_j>0$ for all $j\in\{1,\ldots,n\}$.

\subsection{Outline of the paper}
In Section \ref{sec:prelim}, we begin by introducing some preliminary concepts related to this paper. In Section \ref{sec:sezsg}, we propose necessary and sufficient conditions of existence of a strategically equivalent zero-sum game, then we devise an algorithm that efficiently computes the strategically equivalent zero-sum game and provide a complexity analysis. We will also run an experiment evaluate the performance. Then, in Section \ref{sec:affine}, we propose another efficient algorithm to compute approximate Nash equilibrium if the conditions in Section \ref{sec:sezsg} are not satisfied. The auxillary theorems for the proof in Section \ref{sec:sezsg} are provided in Appendix \ref{sec:auxillary}.
\section{Preliminaries}\label{sec:prelim}
In this section, we recall the definitions of Nash equilibrium and approximate Nash equilibrium. We focus on bimatrix games (2-player games) in this paper. Every bimatrix game can be defined by a tuple $(m,n,A,B)$, where player 1 has $m$ actions, player 2 has $n$ actions, and $A, B\in \mathbb{R}^{m \times n}$ are the payoff matrices of player 1 and 2. Both players can choose to use pure strategies, that is, they can choose a single action from its own set of pure strategies denoted by $S_1 = \{1, ..., m\}$ and $S_2 = \{1,..., n\}$. If the players play pure strategies $(i, j)\in S_1\times S_2$, player 1 and player 2 will receive payoffs $a_{i,j}$ and $b_{i,j}$, respectively. Players may also play mixed strategies in bimatrix games. Player 1 and player 2 can choose probability distributions $\vec p\in\Delta_m$ over $S_1$ and $\vec q\in\Delta_n$ over $S_2$. Then, player 1 has expected payoff $\vec p^\transpose A\vec q$ and player 2 has expected payoff $\vec p^\transpose B\vec q$. 

We state the definition of best response condition and best response correspondence in order to define Nash equilibrium. The best response is the mixed strategy that gives the best outcome for one player, given the mixed strategy of the other player. This is made precise in the following definition.
\begin{definition}[Best response condition \cite{nash1951}]\label{def:bestResponse}
 Let $\vec{p}$ and $\vec{q}$ be mixed strategies of player 1 and player 2. Then $\vec{p}$ is a best response to $\vec{q}$ if and only if for all $i\in S_1$,
 \begin{equation*}
     p_i>0 \implies (A\vec{q})_i=u=\max_{k\in S_1} (A\vec{q})_k,
 \end{equation*}
 and $\vec{q}$ is a best response to $\vec{p}$ if and only if for all $j\in S_2$,
 \begin{equation*}
     q_j>0 \implies (B^\transpose \vec{p})_j=v=\max_{k\in S_2} (B^\transpose \vec{p})_k.
 \end{equation*}
\end{definition}

\begin{definition}[Best response correspondence]
    For the payoff matrices $A$ and $B$, define the best response correspondences $\Gamma_A:\Delta_n\rightrightarrows \Delta_m$ and $\Gamma_B:\Delta_m\rightrightarrows \Delta_n$ as
\begin{align*}
 \Gamma_A(\vec q) &= \{\vec p\in\Delta_m: \vec p^\transpose A \vec q = \max_i [A\vec q]_i \},\\
 \Gamma_B(\vec p) &= \{\vec q\in\Delta_n: \vec p^\transpose B \vec q = \max_j [\vec p^\transpose B]_j \}.
\end{align*}
\end{definition}

When each player's mixed strategy is a best response to the other player's strategy, their strategies form a Nash Equilibrium. This is made precise in the following definition.

\begin{definition}[Nash equilibrium \cite{nash1951}]
    A pair $(\vec p, \vec q)$ of mixed strategies is a Nash Equilibrium (NE) if and only if $\vec p \in \Gamma_A(\vec q)$ and $\vec q \in \Gamma_B(\vec p)$.
\end{definition}

In zero-sum games, we usually call the Nash equilibrium between two players as a saddle point equilibrium (SPE).

\subsection{Strategically Equivalent Games}\label{sub:stratEq}

We focus on a bimatrix game $(m,n,A,B)$ as defined above. We define the Nash equilibrium correspondence as $\Phi(A, B): \mathbb R ^{m\times n}\times \mathbb R ^{m\times n} \rightrightarrows \Delta_m \times \Delta_n$. It was proved in \cite{nash1951} that every bimatrix game with a finite set of pure strategies has at least one NE in mixed strategies. Thus, the image $\Phi(A, B)$ is nonempty for any $(A, B) \in \mathbb R ^{m\times n}\times \mathbb R ^{m\times n}$. We say two games are strategically equivalent when both games have the same set of players, the same set of strategies, and the same set of NE. An equivalent definition of strategic equivalence based on the preference ordering for all mixed strategies $\vec p \in \Delta_m$ and $\vec q \in \Delta_n$ was given by Moulin and Vial \cite{moulin1978strategically}, which is stated below.

\begin{definition}[Strategically equivalence]\label{def:strateEq}
Two bimatrix games $(m,n,A,B)$ and $(m,n,\bar A,\bar B)$ are strategically equivalent if and only if for any $\vec{\bar p}, \vec p \in \Delta_m$ and $\vec{\bar q}, \vec q \in \Delta_n$, we have
\begin{align*}
    \vec{\bar p}^TA\vec q \geq \vec p^T A \vec q \iff \vec{\bar p}^T\bar A\vec q \geq \vec p^T \bar A \vec q, \\
    \vec p^TB\vec{\bar q} \geq \vec p^T B\vec q \iff \vec p^T\bar B\vec {\bar q} \geq \vec p^T \bar B \vec q. 
\end{align*}
\end{definition}

Proving that two games are strategically equivalent in Definition \ref{def:strateEq} is difficult since all $\vec{\bar p}, \vec p \in \Delta_m$ and $\vec{\bar q}, \vec q \in \Delta_n$ need to be checked to satisfy the above conditions. However, there are several classes of transformations such that strategically equivalence is naturally conserved. Positive affine transformations (PAT) ensure the strategically equivalence of two games; that is, if two games have a PAT correspondence, then they are strategically equivalent. We define the PAT correspondence $\Upsilon$ as follows:

\begin{definition}[PAT Correspondence \cite{heyman2019rank}]\label{def:PATCorrespondence}
The game $(m, n, \bar A, \bar B)$ is a PAT of $(m, n, A, B)$ if and only if there exists $\alpha_1, \alpha_2 \in \mathbb R_{>0}$, $\vec u\in \mathbb R^n$, and $\vec v \in \mathbb R^m$ such that $\bar A = \alpha_1 A + \vec 1_m\vec u^T$ and $\bar B = \alpha_2 B + \vec v \vec 1_n^T$. The map $\Upsilon:\mathbb R^{m\times n} \times \mathbb R^{m\times n} \rightrightarrows\mathbb R^{m\times n} \times \mathbb R^{m\times n}$ is a PAT correspondence if
\begin{align*}
    \Upsilon(A,B) = \big\{ (\bar A, \bar B) \in \mathbb R^{m\times n} \times \mathbb R^{m\times n}: (\bar A, \bar B) \text{ is a PAT of } (A, B) \big\}.
\end{align*}
\end{definition}

It is obvious that PAT preserves the preference orderings given in Definition \ref{def:strateEq}. Thus, two games are strategically equivalent if the two games have a PAT correspondence. Moreover, \cite{moulin1978strategically} also showed that the converse also holds. With the results above, we have the following lemma.

\begin{lemma}[\cite{moulin1978strategically}]\label{lem:stratEqVec}
	Two games $(m, n, A, B)$ and $(m, n, \bar A, \bar B)$ are strategically equivalent if and only if $(\bar A, \bar B)\in \Upsilon(A, B)$.
\end{lemma}

We say two games are strategically equivalent via a PAT if they have a PAT correspondence. In a special case where $\alpha_1=\alpha_2=1$, we say that those two games are strategically equivalent via a 1-PAT.

\subsection{Approximate Nash Equilibrium}

While \cite{nash1951} showed that NE exists in all finite games, it remains an open problem to find an algorithm to compute NE efficiently in general bimatrix games \cite{daskalakis2009complexity}. On the other hand, there exists some efficient algorithms to compute approximate NE.

\begin{definition}[$\varepsilon$-well-supported Nash Equilibrium]\label{def:WSNE}
	We refer to the pair of strategies $(\vec{\tilde{p}},\vec{\tilde{q}})$ as an  epsilon-well-supported Nash Equilibrium ($\varepsilon$-WSNE) of game $(m, n,A,B)$ if and only if:
 \begin{align*}
     \text{for all } i\in S_1, k\in S_1, \; \tilde{p}_i>0 &\implies (A\vec{\tilde{q}})_i\geq ( A\vec{\tilde{q}})_k-\varepsilon, \;\\
     \text{for all } j\in S_2, l\in S_2\; \tilde{q}_j>0 &\implies (B^\transpose \vec{\tilde{p}})_j\geq (B^\transpose \vec{\tilde{p}})_l-\varepsilon.
 \end{align*}
\end{definition}

We also define the $\varepsilon$-approximation of payoff matrices as follows.
\begin{definition}\label{def:epsilonApproximation}
    For matrices $\tilde A, R\in\Re^{m\times n}$, $\tilde A$ is an $\varepsilon$-approximation of $R$ if $\tilde A=R+E$, where $\varepsilon \geq \|E\|_{max}$.
\end{definition}
Inspired by \cite[Theorem 1]{kontogiannis2007efficient}, we have the following lemma on the connection between the approximation of the payoff matrices in a game and approximate Nash equilibrium of that game.

\begin{lemma}\label{lem:2epsWSNE}
Given the game $(m,n,\tilde A,\tilde B)$, let $\tilde A$ be an $\tilde \varepsilon_1$-approximation of $R$ and $\tilde B$ be an $\tilde \varepsilon_2$-approximation of $C$. If $(\tilde{\vec p},\tilde {\vec q})$ is an NE of the game $(m,n,\tilde A,\tilde B)$, then $(\tilde{\vec p},\tilde{\vec q})$ is a $2\tilde \varepsilon$-WSNE of the game $(m,n,R,C)$, where $\tilde \varepsilon=\max\{\tilde \varepsilon_1,\tilde \varepsilon_2\}$.
\end{lemma}
\begin{proof}
Applying Definitions \ref{def:bestResponse} and \ref{def:epsilonApproximation}, let $\tilde A = R+E$, where $\|E\|_{max}\leq\tilde \varepsilon$. For player 1 we have that:
 \begin{align*}
     \text{For all } i,k\in S_1, \; \bar{p}_i>0 &\implies (\tilde A\vec{\bar{q}})_i\geq (\tilde A\vec{\bar{q}})_k,\\
    &\iff (R\vec{\bar{q}}+E\vec{\bar{q}})_i\geq (R\vec{\bar{q}}+E\vec{\bar{q}})_k,\\
    &\iff (R\vec{\bar{q}})_i\geq (R\vec{\bar{q}})_k+(E\vec{\bar{q}})_k-(E\vec{\bar{q}})_i.
 \end{align*}
 Since $\varepsilon_1\geq \max_{i,j}\abs{e_{i,j}}$ and $\ene{q}\in \Delta_m$,
\begin{equation*}
    \text{For all } i,k\in S_1, \; (E\vec{\bar{q}})_k-(E\vec{\bar{q}})_i\geq -2\varepsilon_1.
\end{equation*}
Therefore,
\begin{equation*}
   \text{For all } i,k\in S_1, \; \bar{p}_i>0 \implies (R\vec{\bar{q}})_i\geq (R\vec{\bar{q}})_k-2\varepsilon_1.
\end{equation*}
The proof for player 2 is similar and thus omitted.
\end{proof}

In \cite{kontogiannis2007efficient} the authors show that additive transformations have no effect on the set of WSNE. Formally, we have the following lemma.

\begin{lemma}\label{lem:stratEqWSNE_PAT}
	Consider the games $(m,n,A,B)$ and $(m,n,R,C)$ which are strategically equivalent via 1-PAT. The strategy pair $(\tilde {\vec p},\tilde {\vec q})$ is an $\tilde \varepsilon$-WSNE of the game $(m,n,A,B)$ if and only if $(\tilde {\vec p},\tilde {\vec q})$ is an $\tilde \varepsilon$-WSNE of the game $(m,n,R,C)$.
\end{lemma}
\begin{proof}
    Suppose $(\tilde {\vec p},\tilde {\vec q})$ is an $\tilde \varepsilon$-WSNE of the game $(m,n,R,C)$. Applying Definition \ref{def:WSNE} for player 1, we have that:
     \begin{align*}
     \text{For all } i,k\in S_1, \; \tilde {\vec p}_i>0 &\implies (R\tilde{\vec q})_i\geq (R\tilde{\vec q})_k-\tilde \varepsilon,\\
    &\iff (A\tilde {\vec q})_i+(\vec{1}_m \vec{u}^\transpose\tilde{\vec q})_i\geq ( A\tilde {\vec q})_k+(\vec{1}_m \vec{u}^\transpose\tilde{\vec q})_k -\tilde \varepsilon,\\
    &\iff (A\tilde{\vec q})_i \geq (A\tilde{\vec q})_k-\tilde \varepsilon.
 \end{align*}
  The final step of this proof relies on the fact that $(\vec{1}_m \vec{u}^\transpose\tilde{\vec q})_i=(\vec{1}_m \vec{u}^\transpose\tilde{\vec q})_k$ for any $i,k\in S_1$.
 The proof for player 2 is similar and thus omitted. 
\end{proof}

In the next section, we will introduce an efficient algorithm to compute Nash equilibria for general bi-matrix games.
\section{A Fast Algorithm to Compute Strategically Equivalent Zero-Sum Games}\label{sec:sezsg}

In this section, we devise an algorithm that determines a strategically equivalent zero-sum game $(m,n,\bar A,\bar B)$ given a non-zero-sum game $(m,n,A,B)$ with $(\bar A,\bar B)\in\Upsilon(A,B)$ (and $\bar A+\bar B = 0$). This section is based on Chapter 3 of the second author's PhD dissertation \cite{heyman2019computation}.

We first introduce some notations. Define the set $\mathcal M_{m\times n} \subseteq \mathbb R^{m\times n}$ as\begin{align*}
    \mathcal M_{m\times n}(\mathbb R) &= \\
    &\left\{M\in \mathbb{R}^{m\times n}\mid \text{there exists } \vec u\in \mathbb R^n, \vec v \in \mathbb R^m \text{ s.t. }M = \vec 1_m\vec u^T + \vec v\vec 1_n^T\right\}.
\end{align*} 
Note that $\mathcal M_{m\times n}(\mathbb R)$ is a linear space over field $\mathbb R$, hence $\mathcal M_{m\times n}(\mathbb R)$ is closed under addition and scalar multiplication.
Moreover, for matrix $\bar A \in \mathbb R^{m\times n}$, we define the set $\mathcal{WZ}(\bar A)$ as
\begin{align*}
    \mathcal{WZ}(\bar A) \coloneqq \left \{ (\vec w, \vec z)\in \mathbb R^m \times \mathbb R^n \mid \vec w^\transpose \bar A \vec z \neq 0, \vec 1_m^\transpose \vec w = \vec 1_n^\transpose \vec z= 0\right \}.
\end{align*}
Then, given a nonzero-sum game $(m,n,A,B)$, the following theorem provides a necessary and sufficient condition to the existence of strategic equivalent zero sum game to the original game.

\begin{theorem}\label{thm:existzsg}
Consider a nonzero-sum game $(m,n,A,B)$, where $A,B\not\in\mathcal M_{m\times n}(\mathbb R)$. The nonzero-sum game $(m,n,A,B)$ is strategically equivalent to a zero-sum game $(m,n,\bar A, \bar B)$ (where $ \bar B = -\bar A$) if and only if the following conditions are satisfied: 
\begin{enumerate}
    \item[1.] For any $(\vec w, \vec z)\in \mathbb R^m \times \mathbb R^n$ such that $\vec{1}_m^\transpose \vec{w}=\vec{1}_n^\transpose \vec{z}=0$, and $\vec{w}^\transpose B\vec{z}\neq0$, \begin{equation*}
        \gamma \coloneqq -\frac{\vec w^TA\vec z}{\vec w^T B \vec z} > 0.
    \end{equation*}
    \item[2.] $M \coloneqq A + \gamma B \in \mathcal M_{m\times n}(\mathbb R)$.
\end{enumerate}
\end{theorem}

\begin{proof}
We start by proving the reverse direction. Suppose the two conditions are satisfied. Note that as $M = A + \gamma B \in \mathcal M_{m\times n}(\mathbb R)$, $\rank{M} \leq 2$. Next, we consider the three cases when $\rank{M}$ is $2$, $1$ and $0$.
\begin{enumerate}
    \item {\bf Case 1:} $\rank{M} = 2$. We can write $M$ as a summation of 2 rank-1 matrices, so $\vec 1_m \in \colspan{M}$ and $\vec 1_n \in \colspan{M^\transpose}$. Hence, there exists $\vec x_1 \in \mathbb R^n$ such that $M\vec x_1 = \vec 1_m $.
    Moreover, there exists $\vec y_1 \in \mathbb R^m$ such that $\vec y_1^\transpose M \neq \vec 0_m$ and $\vec 1_m^\transpose \vec y_1 \neq 0$. Let $w_1 = \vec y_1^\transpose M\vec x_1\neq 0$ and let $\hat {\vec u} := (w_1^{-1}\vec y_1^\transpose M)^\transpose$. Then, $w_1^{-1}M\vec x_1 \vec y_1^\transpose M = M - \vec 1_m\hat{\vec u}^\transpose$. By applying Wedderburn rank reduction formula \cite[p.69]{wedderburn1934lectures}, we have
    \begin{align*}
        M_2 = M -w_1^{-1}M\vec x_1 \vec y_1^\transpose M  = M - \vec 1_m\hat{\vec u}^\transpose.
    \end{align*}
    Next, we show the existence of $\vec {\hat v}\in \mathbb R^m$ such that $M_2 = \vec {\hat v} \vec 1_n^\transpose$. Indeed, $M \in \mathcal M_{m\times n}(\mathbb R)$ and $\vec 1_m\hat{\vec u}^\transpose \in \mathcal M_{m\times n}(\mathbb R)$. As $\mathcal M_{m\times n}(\mathbb R)$ is closed under addition, we have $M_2 \in \mathcal M_{m\times n}(\mathbb R)$. In addition, by Wedderburn \cite[p. 69]{wedderburn1934lectures}, $\rank{M_2}=1$. By Theorem \ref{thm:rankreduction}, $\vec 1_m \not \in \colspan{M_2}$ implies there exists $\hat{\vec v}\in \mathbb R^m$ such that $M_2 = \hat{\vec v}\vec 1_n^\transpose$. Finally, letting $\bar A := A - \vec 1_m \hat{\vec u}^\transpose$ and $\bar B := \gamma B - M_2 $, we have 
    \begin{align*}
        \bar A + \bar B =  A + \gamma B - \vec 1_m\hat{\vec u}^\transpose - M_2 = 0_{m\times n}.
    \end{align*}
    Since $\gamma>0$, we conclude that $(m, n, \bar A, \bar B)$ is strategically equivalent to $(m, n, A, B)$ via a PAT.
    \item {\bf Case 2:} $\rank{M} = 1$. 
    As $\rank{M} = 1$, either $\vec 1_m \in \colspan M$ or $\vec 1_n \in \colspan{M^\transpose}$.
    We first assume $\vec 1_m \in \colspan M$. Let $M = \vec 1_m \hat{\vec u}^\transpose$. Then we can let $\bar A = A - M$, and $\bar B = \gamma B$ and we have $ \bar A + \bar B =0_{m\times n}$. 
    
    The proof of the case $\vec 1_n \in \colspan{M^\transpose}$ is similar and therefore omitted. Hence, we have found $\bar A, \bar B$ such that $(m, n, A, B)$ is strategically equivalent to $(m, n, \bar A, \bar B)$ via a PAT. 
    \item {\bf Case 3:} $\rank{M} = 0$. Let $\bar A =  A$ and $\bar B = \gamma B$. Since $M = A + \gamma B = 0_{m\times n}$, $(m, n, \bar A, \bar B)$ is zero-sum and strategically equivalent to $(m, n, A, B)$ via a PAT.
\end{enumerate}
Next, we prove the forward direction and assume that the nonzero-sum game $(m,n,A,B)$ is strategically equivalent to a zero sum game $(m,n,\bar A, \bar B)$ via PAT. Since $B \not \in \mathcal M_{m\times n}(\mathbb R)$, Theorem \ref{thm:wznonempty} in Appendix \ref{sec:auxillary} implies $\mathcal{WZ}(B) \neq \emptyset$. Pick $(\vec w, \vec z)\in \mathcal{WZ}(B)$, we have the following
\begin{enumerate}
    \item[(a)] $    \vec w^\transpose A \vec z = \alpha_1 \vec w^\transpose \bar A \vec z + \vec w^\transpose\vec 1_m \vec u^\transpose \vec z = \alpha_1 \vec w^\transpose \bar A \vec z $,
    \item[(b)] $
    \vec w^\transpose B \vec z = -\alpha_2 \vec w^\transpose \bar A \vec z + \vec w^\transpose\vec v \vec 1_n^\transpose \vec z = -\alpha_2 \vec w^\transpose \bar A \vec z$.
\end{enumerate}
Since $(\vec w, \vec z) \in \mathcal{WZ}(B)$, $\vec w^\transpose B \vec z \neq 0$, hence $\gamma$ is well-defined. As $\alpha_1 >0$ and $\alpha_2 > 0$, we conclude that
\begin{align*}
    \gamma = -\frac{\vec w^T A\vec z}{\vec w^T  B \vec z} = \frac{\alpha_1\vec w^\transpose\bar A \vec z}{\alpha_2 \vec w^\transpose\bar  A\vec z} = \frac{\alpha_1}{\alpha_2}> 0.
\end{align*} 
Note it is straightforward to show that $\mathcal{WZ}(A) = \mathcal{WZ}(\bar A) =\mathcal{WZ}(\bar B)$. The fact that $M \coloneqq A + \gamma B$ is in $\mathcal M_{m\times n}(\mathbb R)$ follows from simple algebraic manipulations. The proof is thus complete.
\end{proof}
To turn the above theorem into a fast algorithm, we need to solve two specific problems. The first one is to determine whether or not the payoff matrices are in $\mathcal M_{m\times n}(\mathbb R)$ and the second one is to compute $(\vec w, \vec z) \in \mathcal{WZ}(B)$. In the next section, we derive two results that immediately yield linear time algorithms to solve these two problems.

\subsection{Algorithmic Implications for Matrices in $\mathcal M_{m\times n}(\mathbb R)$}
We first derive a fast approach to determine whether or not a matrix is in $\mathcal M_{m\times n}(\mathbb R)$ in the next theorem.
\begin{theorem}\label{thm:testFinM}
Given a matrix $F\in\Re^{m\times n}$, select any $(i,j)\in\{1\dots m\} \times \{1\dots n\}$ and let
\begin{align}\label{eq:prop:testFinM}
    \bar{F}&\coloneqq\vec{1}_mF_{(i)}+(F^{(j)}-\vec{1}_mf_{i,j})\vec{1}_n^\transpose.
\end{align}
Then $F$ is in $\mc{M}_{m\times n}(\Re)$ if and only if $F = \bar F$. 
\end{theorem}
\begin{proof}
To begin with, notice that $\vec 1_m F_{(i)} \in \mathcal{M}_{m\times n}(\mathbb R)$ and $(F^{(j)}-\vec{1}_mf_{i,j})\vec{1}_n^\transpose \in \mathcal{M}_{m\times n}(\mathbb R)$. 
Let $\hat F \coloneqq F - \bar F = F - \vec{1}_mF_{(i)}-(F^{(j)}-\vec{1}_mf_{i,j})\vec{1}_n^\transpose$, then $\hat F \in \mathcal{M}_{m\times n}(\mathbb R)$ if and only if $F \in \mathcal{M}_{m\times n}(\mathbb R)$, as $\mathcal{M}_{m\times n}(\mathbb R)$ is closed under addition. We will show that $\hat{F}$ is in $\mc{M}_{m\times n}(\Re)$ if and only if $\hat{F}=\vec{0}_{m\times n}$. We have that
\begin{align*}
    \hat{F}&=F-\vec{1}_mF_{(i)}-(F^{(j)}-\vec{1}_mf_{i,j})\vec{1}_n^\transpose\\
    &=\setlength\arraycolsep{1.2pt} \begin{bmatrix}
    f_{1,1}-f_{i,1}-f_{1,j}+f_{i,j}, & \dots & f_{1,n}-f_{i,n}-f_{1,j}+f_{i,j} \\
    f_{2,1}-f_{i,1}-f_{2,j}+f_{i,j}, & \dots & f_{2,n}-f_{i,n}-f_{2,j}+f_{i,j} \\
    \vdots                                                    & \ddots& \vdots                          \\
    f_{m,1}-f_{i,1}-f_{m,j}+f_{i,j},  & \dots & f_{m,n}-f_{i,n}-f_{m,j}+f_{i,j}
    \end{bmatrix}.
\end{align*}
In this form, it is clear that $\hat{F}_{(i)}=\vec{0}_n^\transpose$ and $\hat{F}^{(j)}=\vec{0}_m$. Since $\hat{F}_{(i)}=\vec{0}_n^\transpose$ and $\hat{F}^{(j)}=\vec{0}_m$, from Lemma \ref{lem:FinMis0} $\hat{F}$ is in $\mc{M}_{m\times n}(\Re)$ if and only if $\hat{F}=\vec{0}_{m\times n}$. 
\end{proof}

The matrix $\hat{F}$ constructed in Theorem \ref{thm:testFinM} depends on which indices $(i,j)$ were used to construct it, and the proper notation should be $\hat{F}(i,j)$. However, since in all of our results that require similar notation the selection of $(i,j)$ is arbitrary, in an abuse of notation we drop $(i,j)$ and simply use $\hat{F}$. An algorithm for determining if a matrix is in $\mc{M}_{m\times n}(\Q)$ is shown in Algorithm \ref{alg:FinM}.

We now turn our attention to an efficient manner to calculate $\vec{z}\in\big\{\vec{z}\in \Re^n \mid\vec{1}_n^\transpose \vec{z}=0\big\}$ and $\vec{w}\in\big\{\vec{w}\in\Re^m\mid \vec{1}_m^\transpose \vec{w}=0\big\}$ such that $(\vec w, \vec z) \in \mathcal{WZ}(F)$. This is used to compute a candidate parameter $\gamma$ in Theorem \ref{thm:existzsg}. The next corollary derives a computationally efficient method for computing $\vec{z}$ and $\vec{w}$ for $F\notin\mc{M}_{m\times n}(\Re)$. 

\begin{algorithm}[tbh]
	\caption{Algorithm for determining if a matrix is in $\mc{M}_{m\times n}(\Q)$}
	\label{alg:FinM}
	\begin{algorithmic}[1] 
		\Function{IsMatrixInM}{$F$}
        \State Select any $(i,j)\in\{1\dots m\} \times \{1 \dots n\}$
		\For{$s\gets1,m$}
		   \State $R_{(s)}\gets F_{(i)}$
		\EndFor
		\For{$t\gets1,n$}
		    \State $C^{(t)}\gets F^{(j)}-\vec{1}_m f_{i,j}$
		\EndFor
		\State $\hat{F}\gets F-R-C$
		\State MatrixInM$\gets1$
		\For{$l\gets1,m$} \label{state:alg:FinM:SearchF}
		    \For{$k\gets1,n$}
		        \If{$\hat{f}_{l,k}\neq0$}
		            \State MatrixInM$\gets0$
		            \State \textbf{break}
		         \EndIf
		    \EndFor
		    \If{MatrixInM$=0$}
		       \State \textbf{break}
		    \EndIf
		\EndFor \label{state:alg:FinM:SearchFend}
        \State \textbf{return} (MatrixInM)
		\EndFunction
	\end{algorithmic}
\end{algorithm}

\begin{corollary}\label{cor:prop:testFinM:A}
Given matrix $F\in\Re^{m\times n}$, $F\notin\mc{M}_{m\times n}(\Re)$, construct $\hat{F}$ as in Theorem \ref{thm:testFinM}. Choose any $(l,k)$ such that $\hat{f}_{l,k}\neq0$, let $\vec{w}=\vec{e}_l-\vec{e}_i$, and $\vec{z}=\vec{e}_k-\vec{e}_j$. Then $\vec{z}\in\big\{\vec{z}\in \Re^n \mid \vec{1}_n^\transpose \vec{z}=0\big\}$, $\vec{w}\in\big\{\vec{w}\in\Re^m\mid \vec{1}_m^\transpose \vec{w}=0\big\}$, and 
\begin{equation*}
\vec{w}^\transpose F \vec{z}=\vec{w}^\transpose \hat{F} \vec{z}=\hat{f}_{l,k}\neq0.
\end{equation*}
\end{corollary}
\begin{proof}
Clearly $\vec{w}$ and $\vec{z}$ are constructed such that $\vec{z}\in\big\{\vec{z}\in \Re^n \mid \vec{1}_n^\transpose \vec{z}=0\big\}$, $\vec{w}\in\big\{\vec{w}\in\Re^m\mid \vec{1}_m^\transpose \vec{w}=0\big\}$. From Theorem \ref{thm:testFinM} we have that:
\begin{align*}
    \hat{F}&= F-\vec{1}_mF_{(i)}-(F^{(j)}-\vec{1}_mf_{i,j})\vec{1}_n^\transpose.
\end{align*}
Then we have:
\begin{align*}
    \vec{w}^\transpose \hat{F} \vec{z}&=\vec{w}^\transpose F \vec{z}-\vec{w}^\transpose \vec{1}_mF_{(i)}\vec{z}-\vec{w}^\transpose (F^{(j)}-\vec{1}_mf_{i,j})\vec{1}_n^\transpose \vec{z} =\vec{w}^\transpose F \vec{z}.
\end{align*}
Since $\hat{F}$ was constructed such that $\hat{F}_{(i)}=\vec{0}_n^\transpose$ and $\hat{F}^{(j)}=\vec{0}_m$, we have that:
\begin{align*}
    \vec{w}^\transpose \hat{F} \vec{z}&=(\vec{e}_l-\vec{e}_i)^\transpose \hat{F} (\vec{e}_k-\vec{e}_j)\\
    &=(\vec{e}_l-\vec{e}_i)^\transpose(\hat{F}\vec{e}_k-\hat{F}\vec{e}_j)\\
    &=(\vec{e}_l-\vec{e}_i)^\transpose(\hat{F}\vec{e}_k-\vec{0}_m)\\
    &=\vec{e}_l^\transpose \hat{F}\vec{e}_k - \vec{e}_i^\transpose \hat{F}\vec{e}_k\\
    &=\vec{e}_l^\transpose \hat{F}\vec{e}_k - \vec{0}_n^\transpose \vec{e}_k\\
    &=\vec{e}_l^\transpose \hat{F}\vec{e}_k\\
    &=\hat{f}_{l,k}.
\end{align*}
Finally, $\hat{f}_{l,k}$ was selected such that $\hat{f}_{l,k}\neq0$. This completes the proof. 
\end{proof}

\subsection{A Simple Example: Rock-Paper-Scissors}\label{sub:example}

Consider the game matrix given in Figure \ref{fig:subRPS:RPS} and let us represent this game as $(m,n,A,-A)$. This is the classic Rock-Paper-Scissors (R-P-S) with well known NE strategies $\vec{p}^*=\vec{q}^*=[\frac{1}{3}, \frac{1}{3}, \frac{1}{3}]^\transpose$. The game in Figure \ref{fig:subPAT:RPS} is a positive affine transformation of $(m,n,C,-C)$. Let us represent this game as $(m,n,A,B)$. Clearly, $(m,n,A,B)$ is neither zero-sum nor constant-sum. However, by applying the process outlined above one can obtain the game in Figure \ref{fig:subSZSG:RPS}, which is a zero-sum game and strategically equivalent to $(m,n,A,B)$.

	\begin{figure}[htb]%
	\centering
		\subfloat[R-P-S]{
		\begin{game}{3}{3}
			\> R    \> P\> S\\ 
			R   \> $0,0$   \> $-1,1$ \> $1,-1$\\
			P  \> $1,-1$  \> $0,0$ \> $-1,1$\\
			S \> $-1,1$  \> $1,-1$ \> $0,0$
		\end{game}
		\label{fig:subRPS:RPS}
		}\hspace{5mm}
		\subfloat[A PAT of R-P-S]{
		\begin{game}{3}{3}
			\> R    \> P\> S\\
			R   \> $-5,8$  \> $-6,10$ \> $10,6$\\
			P  \> $-1,-5$  \> $-2,-3$ \> $2,-1$\\
			S \> $-9,13$  \> $2,9$ \> $6,11$
		\end{game}
		\label{fig:subPAT:RPS}
	    }\hspace{5mm}
		\subfloat[Zero-Sum Game equivalent to \protect\subref{fig:subPAT:RPS}]{
		\begin{game}{3}{3}
			\> R    \> P\> S\\
			R   \> $-16,16$  \> $-20,20$ \> $-12,12$\\
			P  \> $-12,12$  \> $-16,16$ \> $-20,20$\\
			S \> $-20,20$  \> $-12,12$ \> $-16,16$
		\end{game}
		\label{fig:subSZSG:RPS}
        }%
        \caption{\protect\subref{fig:subRPS:RPS} The classic zero-sum game Rock-Paper-Scissors. \protect\subref{fig:subPAT:RPS} A nonzero-sum game that is strategically equivalent to Rock-Paper-Scissors through a PAT. \protect\subref{fig:subSZSG:RPS} A zero-sum game that is strategically equivalent to the PAT of Rock-Paper-Scissors.}
        \label{fig:RPS}
        \end{figure}

By letting $\vec{w}=[-1, 1, 0]^\transpose$, $\vec{z}=[-1, 0, 1]^\transpose$, we have $\gamma=-\frac{\vec{w}^\transpose{A}\vec{z}}{\vec{w}^\transpose{B}\vec{z}}=-\frac{-12}{6}=2$. Then let $(i,j)=(1,1)$ to obtain the strategically zero-sum game:
\begin{align*}
    \bar{A}&={A}-\vec{1}_m\begin{bmatrix}11&14&22\end{bmatrix},\qquad \bar{B}=2{B}-\begin{bmatrix}\phantom{-}0\\-22\\\phantom{-}6\end{bmatrix}\vec{1}_n^\transpose.
\end{align*}
The result of these calculations is the zero-sum game $(m,n,\bar{A},\bar{B})$ which is displayed in Figure \ref{fig:subSZSG:RPS} and, as expected, has the NE strategies $\vec{p}^*=\vec{q}^*=[\frac{1}{3}, \frac{1}{3}, \frac{1}{3}]^\transpose$.

\subsection{Algorithm and Simulations}


We have shown that given the game $(m,n,{A},{B})$, one can determine if the game is strategically equivalent to the zero sum game $(m,n,\bar A,\bar B)$ through a PAT. If so, then it is possible to construct a rank-0 game which is strategically equivalent to the original game. One can then efficiently solve the strategically equivalent zero-sum game via linear programming.  We state the key steps in our algorithm below and show that both the determination of strategic equivalence and the computation of the strategically equivalent zero-sum game can be done in time $\bigO{(mn)}$.

The analytical results and discussions throughout this paper apply to real bimatrix games, with $({A},{B})\in\Re^{m\times n} \times \Re^{m\times n}$. However, for computational reasons, when discussing the algorithmic implementations we focus on rational bimatrix games, with  $({A},{B})\in\Q^{m\times n} \times \Q^{m\times n}$.

\begin{algorithm}
	\caption{Condensed algorithm for solving a strategically rank-0 game}
	\label{alg:solveNEShort}
	\begin{algorithmic}[1] 
		\Procedure{ShortSER0}{${A},{B}$}
		\If{${A}$ and/or ${B}\in\mc{M}_{m\times n}(\Q)$}
		    \State Calculate pure strategy NE and \textbf{exit}
		\Else
		    \State $\gamma\gets-\frac{\vec{w}^\transpose{A}\vec{z}}{\vec{w}^\transpose{B}\vec{z}}$
		    \If{$\gamma<0$}
		        \State Not strategically equivalent via PAT. \textbf{exit} 
		    \Else
		        \State $M\gets{A}+\gamma{B}$
		        \If{$M$ not in $\mc{M}_{m\times n}(\Q)$}
		            \State Not strategically equivalent via PAT. \textbf{exit}
		        \Else 
		            \State choose $(i,j)\in(m\times n)$
		            \State $\bar{A}\gets{A} - \vec{1}_mM_{(i)}$
		            \State $\bar{B}\gets \gamma B-(M^{(j)}-\vec{1}_mm_{i,j})\vec{1}_n^\transpose$
		            \State Solve $(m,n,\bar{A},\bar{B})$ via LP
		        \EndIf
		      \EndIf  
		\EndIf
		\EndProcedure
	\end{algorithmic}
\end{algorithm}

\begin{theorem}\label{thm:shortSER0}
The \textsc{SER0} algorithm determines if a game $(m,n,{A},{B})$ is strategically equivalent to a rank-$0$ game and returns the strategically equivalent zero-sum game in time $\bigO{(mn)}$.
\end{theorem}
We now turn our attention to the computational complexity of Algorithm \ref{alg:solveNEShort}. First, testing whether or not a matrix is in $\mc{M}_{m\times n}(\mathbb R)$ is equivalent to implementing \eqref{eq:prop:testFinM} and then comparing two matrices $F$ and $\bar F$.  Both these operations take time $\bigO{(mn)}$. Next, for any matrix $F\in\mc{M}_{m\times n}(\mathbb R)$, calculating $\vec{w}^\transpose F \vec{z}$ takes time $\bigO{(m^2)}+\bigO{(n)}<\bigO{(mn)}$ for $m\leq n$. So, calculating $\gamma$ can be done in time $\bigO{(mn)}$ if one has candidate vectors $(\vec{w},\vec{z})$. Corollary \ref{cor:prop:testFinM:A} gives an algorithm for determining such $(\vec{w},\vec{z})$ that runs in time $\bigO{(mn)}$.

Forming the $D$ matrix takes $mn$ multiplications and $mn$ additions, and therefore has time $\bigO{(mn)}$. Finally, calculating $\bar{A},\bar{B}$ for the case $\rank{M}=2$ consists of scalar-matrix multiplication, vector outer product, and matrix subtraction. Therefore, it has time $\bigO{(mn)}$.

This shows that overall the algorithm can both identify whether a game is strategically equivalent to a zero-sum game through a PAT and, if so, can determine the equivalent game in time $\bigO{(mn)}$. 

\subsection{Numerical Results}
To evaluate the performance of the SER0 algorithm we ran the following experiment. We fixed $m$ at $m=1000$ and varied $n$ with $n\in[2000,10000]$. Each entry in the payoff matrices are uniformly distributed. We ran SER0 on $1000$ different game instances for each value of $n$. 


All experiments were conducted on a standard desktop computer running Windows 7 with $16$GB of RAM and an Intel Xeon E5-1603 processor with $4$ cores running at $2.8$GHz.
\begin{figure}[H]
\begin{center}
    \includegraphics[width=0.6\textwidth]{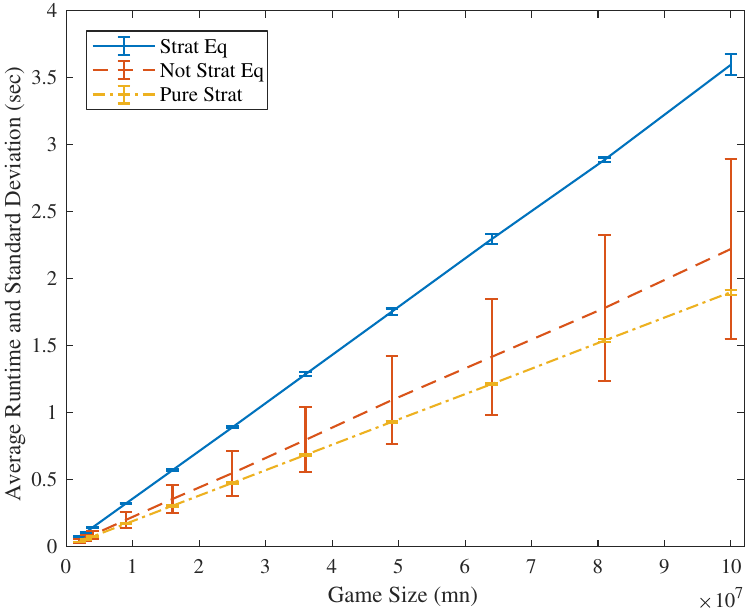}
	\caption{Average running time and standard deviation of the SER0 algorithm for strategically equivalent games, games that are not strategically equivalent, and games that are guaranteed to have at least one pure strategy NE. For each value of $mn$, we ran the algorithm on $1000$ such games.} \label{fig:runtimeRect}
\end{center}
\end{figure}
For each set of experiments, we created instances $(m,n,{A},{B})$ that were strategically equivalent to a zero-sum game $(m,n,\bar A,\bar B)$. In all cases tested, the SER0 algorithm correctly identified the games as strategically zero-sum and calculated the equivalent game $(m,n,\bar {A},\bar {B})$. As expected, one can observe from Figure \ref{fig:runtimeRect} that there is a clear linear relationship between the runtime of SER0 and the size of the game instance.  In addition, for very large games of size $(1000 \times 10000)$ SER0 found the equivalent game in an average time of $3.6$ seconds.

We then created games which were guaranteed to have a pure strategy NE. In other words, at least one of ${A}$ or ${B}$ were in $\mc{M}_{m\times n}(\Q)$. Again, the SER0 algorithm correctly identified all of these cases as having a pure strategy NE. Again, we observe that SER0's runtime is linear in this case. As expected, this case is much faster than the strategically equivalent case as there is no need to calculate $\bar{A},\bar{B},M$ nor test for $M\in\mc{M}_{m\times n}(\Q)$.

Finally, we conducted experiments on games that were not strategically equivalent to a zero-sum game via a PAT. Similar to the other two cases, the SER0 algorithm correctly identified these games as not strategically equivalent to a zero-sum game. As Figure \ref{fig:runtimeRect} show, this case also exhibits a linear relationship between runtime and the game size, although with a much higher standard deviation compared to the other two sets of experiments. This higher standard deviation is readily explainable by examining the SER0 algorithm.  When testing whether or not a game is strategically equivalent, the test can return a negative result if either $\gamma\leq0$ or $M\notin\mc{M}_{m\times n}(\Q)$. For cases of $\gamma\leq 0$, the algorithm terminates and returns a negative result in much less time than it takes to calculate the $M$ matrix and test for $M\in\mc{M}_{m\times n}(\Q)$. 

\section{Approximate Nash Equilibrium Through an Affine Transformation}\label{sec:affine}
Thus far we have introduced sufficient and necessary conditions of the existence of strategically equivalent zero-sum games via PAT. Then, we proposed an algorithm to compute a Nash equilibrium in time $\mathcal O(mn)$, given that the conditions in Theorem \ref{thm:existzsg} hold. Now, we introduce another algorithm that efficiently computes an approximate Nash equilibrium when conditions in Theorem \ref{thm:existzsg} are not satisfied.

There are two key insights that drive the second algorithm. Given a bimatrix game $(m, n, A, B)$, we can determine a strategically equivalent (not necessarily zero-sum) game $(m,n,R,C)$ via PAT. Let $E = R+C$. Consider the zero-sum game $(m, n, \tilde A, \tilde B)$, where $\tilde A = R-\frac{1}{2}E$ and $\tilde B = C - \frac{1}{2}E$. One may want the zero-sum game $(m, n, \tilde A, \tilde B)$ to be, in some sense, `close' to $(m,n,R,C)$. One approach is to determine matrices $R$ and $C$ to reduce the max norm of $E$, by which the error of the approximate NE can be controlled. 

The second insight is about the nature of the game transformation itself. From Definition \ref{def:strateEq}, we conclude that when the two games are strategically equivalent, then for every player, the best response correspondence is the same in both games. We identify an affine game transformations below that bijectively transforms the best response correspondence. In other words, the best response correspondence of each player in the transformed game can be bijectively mapped to the best response correspondence of the original game.

\begin{theorem}\label{thm:bestResponseMap}
Let $D_1\in\mathcal D_n$ and $D_2\in\mathcal D_m$ be positive definite diagonal matrices and cosndier the two games $(m,n,A,B)$ and $(m,n,AD_1,D_2B)$. We have
\begin{equation*}
    \Gamma_A(\vec q) = \Gamma_{AD_1}\left( \frac{D_1^{-1} \vec q}{\vec{1}_n^\transpose(D_1^{-1} \vec q)}\right), \qquad \Gamma_B(\vec p) = \Gamma_{D_2B}\left(\frac{D_2^{-1} \vec p}{\vec{1}_m^\transpose(D_2^{-1} \vec p)}\right).
\end{equation*}
\end{theorem}
\begin{proof}
Since $D_1$ is positive definite diagonal matrix, we have $\vec 1_n^\transpose(D_1^{-1}\vec q)>0$. This yields
\begin{align*}
    \vec p \in \Gamma_A(\vec q) & \iff \vec p^\transpose A \vec q = \max_i [A\vec q]_i, \\
    & \iff \frac{\vec p^\transpose A D_1 D_1^{-1} \vec q}{\vec 1_n^\transpose(D_1^{-1}\vec q)} = \max_i \left[\frac{A D_1 D_1^{-1} \vec q}{\vec 1_n^\transpose(D_1^{-1}\vec q)}\right]_i, \\
    & \iff \vec p \in \Gamma_{AD_1}\left(\frac{D_1^{-1}\vec q}{\vec 1_n^\transpose(D_1^{-1}\vec q)}\right).
\end{align*}
Hence the first equation is proved. The proof of the second equation is similar and therefore omitted.
\end{proof}

While the affine game transformation described above is simple and a straightforward extension of PAT, we are unable to find any reference in the literature on such an affine transformation. We next show that the Nash equilibria and approximate Nash equilibria of the transformed game can be mapped to that of the original game.
\begin{lemma}\label{lem:affineTransformation}
Consider games $(m,n,A,B)$ and $(m,n,AD_1,D_2B)$, where $D_1$, $D_2$ are diagonal, positive definite matrices. Let $\tilde{\vec p}^* = \frac{D_2^{-1} \vec p^*}{\vec{1}_m^\transpose(D_2^{-1} \vec p^*)}$, $\tilde{\vec q}^* = \frac{D_1^{-1} \vec q^*}{\vec{1}_n^\transpose(D_1^{-1} \vec q^*)}$. Then, \begin{enumerate}
    \item[(a)]$(\vec p^*, \vec q^*)$ is an $\varepsilon$-WSNE of the game $(m, n, A, B)$ if and only if the strategy pair $(\tilde{\vec p}^*, \tilde{\vec q}^*)$ is an $\tilde \varepsilon$-WSNE of the game $(m, n, AD_1, D_2B)$, where ${\varepsilon} = \tilde \varepsilon/\max\{\|D_1\|_{max}, \|D_2\|_{max}\}$
    \item[(b)]$(\vec p^*, \vec q^*)$ is a NE of the game $(m, n, A, B)$ if and only if the strategy pair $(\tilde{\vec p}^*, \tilde{\vec q}^*)$ is a NE of the game $(m, n, AD_1, D_2B)$.
\end{enumerate}
\end{lemma}
\begin{proof}
We start by proving (a). Suppose $(\tilde {\vec p}^*, \tilde{\vec q}^*)$ is an $\tilde \varepsilon$-WSNE of the game $(m,n,AD_1,D_2B)$. From Definition \ref{def:WSNE} we have 
\begin{align*}
    \forall i, k \in S_1, \tilde {\vec p}_i^* > 0 \implies (AD_1\tilde{\vec q}^*)_i \geq (AD_1\tilde{\vec q}^*)_k - \tilde \varepsilon.
\end{align*}
Denote $d_{2,j}$ as the $(j, j)th$ entry of matrix $D_2$. Since $d_{2, j} > 0 $ $\forall j \in S_1$, we have $\tilde{\vec p}_i^* > 0 \iff \left(\frac{D_2\tilde{\vec p}^*}{\vec{1}_m^\transpose(D_2\tilde{\vec p}^*)}\right)_i = \tilde {\vec p}^*_i > 0$. This yields
\begin{align*}
    (AD_1\tilde{\vec q}^*)_i \geq (AD_1& \tilde{\vec q}^*)_k - \tilde \varepsilon \\
    &\implies\bigg{(}\frac{AD_1D_1^{-1}\vec q^*}{\vec{1}_n^\transpose(D_1^{-1} \vec q^*)}\bigg{)}_i \geq \bigg{(}\frac{AD_1D_1^{-1}\vec q^*}{\vec{1}_n^\transpose(D_1^{-1} \vec q^*)}\bigg{)}_k - \frac{\tilde \varepsilon}{\vec{1}_n^\transpose(D_1^{-1} \vec q^*)}, \\
    &\iff (AD_1D_1^{-1}\vec q^*)_i \geq (AD_1D_1^{-1}\vec q^*)_k - \varepsilon,\\
    &\iff (A\vec q^*)_i \geq (A\vec q^*)_k - \varepsilon.
\end{align*}
The proof of the other player is similar and thus omitted. The proof of (b) is straightforward by setting $\varepsilon = 0$.
\end{proof}

\subsection{Approximate Nash Equilibrium}\label{sub:epsilonNE}
We have shown in Lemma \ref{lem:2epsWSNE} that additive transformations only affect $\varepsilon$ of the set of WSNE. Lemma \ref{lem:affineTransformation} shows the set of NE is preserved via affine transformations. Now we propose a theorem that establishes the connection (in terms of approximate WSNE) between a bimatrix game and the approximated zero-sum game.

\begin{theorem}\label{thm:epsilonSEG}
Let $D_1\in \mathcal D_n$, $D_2\in\mathcal D_m$, $\vec u\in\Re^n$, $\vec v\in\Re^m$. Consider games $(m,n,A,B)$ and $(m, n, R-\frac{1}{2}E, C-\frac{1}{2}E)$, where $R = AD_1-\vec 1_m\vec u^T$, $C = D_2B-\vec v\vec 1_n^T$, $E = R + C$, and define $\tilde \varepsilon := \|E\|_{max}$. Let 
\begin{equation*}
\vec p^* = \frac{D_2 \vec {\tilde p^*}}{\vec{1}_m^\transpose(D_2 \vec{ \tilde p^*})}, \qquad \vec q^* = \frac{D_1 \vec {\tilde q^*}}{\vec{1}_n^\transpose(D_1 \tilde{\vec q^*})}, \qquad \varepsilon =  \frac{\tilde\varepsilon}{\max\{\|D_1\|_{max}, \|D_2\|_{max}\}}.
\end{equation*}
If $(\tilde{\vec p}^*, \tilde{\vec q}^*)$ is a saddle point equilibrium of the zero sum game $(m, n, R-\frac{1}{2}E, C-\frac{1}{2}E)$, then the strategy pair $(\vec p^*, \vec q^*)$ is a $\varepsilon$-WSNE of the game $(m, n, A, B)$.
\end{theorem}

\begin{proof}
Suppose $(\vec {\tilde p^*}, \vec{\tilde q^*})$ is a saddle point equilibrium of the game $(m, n, R-\frac{1}{2}E, C-\frac{1}{2}E)$. From Lemma \ref{lem:2epsWSNE} we have $(\tilde{\vec p}^*, \tilde{\vec q}^*)$ is an $\tilde\varepsilon$-WSNE of game $(m, n, R, C)$. From Lemma \ref{lem:stratEqWSNE_PAT}, $(\tilde{\vec p}^*, \tilde{\vec q}^*)$ is an $\tilde \varepsilon$-WSNE of game $(m, n, AD_1, D_2B)$. Finally, Lemma \ref{lem:affineTransformation} implies that $(\vec p^*, \vec q^*)$ is a $\varepsilon$-WSNE of the original game $(m,n,A,B)$.
\end{proof}

\subsection{Algorithmic Implementation}\label{sec:algorithm}
In the sequel, we formulate an optimization problem to compute $D_1\in \mathcal D_n$, $D_2\in\mathcal D_m$, $\vec u\in\Re^n$, $\vec v\in\Re^m$ given $(A,B)$. This problem is equivalent to minimizing the max norm of the matrix $E=R+C=AD_1-\vec{1}_m\vec{u}^\transpose+D_2 B-\vec{v}\vec{1}_n^\transpose$, which is formulated as the optimization problem:
\begin{equation*}\label{prog:CP1}
\begin{aligned}
& \underset{\vec{u},\vec{v},D_1,D_2}{\text{min}}
& & \norm{AD_1-\vec{1}_m\vec{u}^\transpose+D_2 B-\vec{v}\vec{1}_n^\transpose}_{max} \\
& \text{s.t.}
& & \vec{u}\in\Re^n, \ \vec{v}\in\Re^m,D_1 \succeq I_n, D_2\succeq I_m, D_1 \in \ALP D_n, D_2\in \ALP D_m.\\
\end{aligned}\tag{CP1}
\end{equation*}
It is well known (see, for example \cite[pg.~150]{boyd2004convex}) that problems similar to \ref{prog:CP1} can be equivalently written in the epigraph form as:
\begin{equation*}\label{prog:LP1}
\begin{aligned}
& \underset{\vec{u},\vec{v},D_1, D_2,t}{\text{min}} 
& & \phantom{-}t\\
& \text{s.t.}
& & -d_{1,j} a_{i,j}- d_{2,i} b_{i,j} - t +u_j+v_i\leq 0, \qquad i=1,\dots, m, j=1, \dots, n \\ 
& & & d_{1,j} a_{i,j}+d_{2,i} b_{i,j} - t -u_j-v_i\leq 0, \qquad \quad\; i=1,\dots, m, j=1, \dots, n\\ 
& & & \phantom{-}\vec{u}\in\Re^n, \ \vec{v}\in\Re^m,d_{1,j}\geq 1,d_{2,i}\geq 1,t\geq 0.\\
\end{aligned}\tag{LP1}
\end{equation*}
The program in \ref{prog:LP1} is a linear program and can be solved in polynomial time. Thus, what we have shown is that given a nonzero-sum game $(m, n, A, B)$, we can, in polynomial time, find the zero-sum game $(m, n, R-\frac{1}{2}E, C-\frac{1}{2}E)$ such that $\tilde \varepsilon = \|E\|_{max}$ is minimized. Then, by Theorem \ref{thm:epsilonSEG}, we can compute an $\varepsilon$-WSNE of game $(m, n, A, B)$ by calculating the saddle point equilibrium $(\tilde{\vec p}^*, \tilde{\vec q}^*)$ of the zero-sum game. As a result, the algorithm finds an $\varepsilon$-WSNE of $(m, n, A, B)$ in polynomial time using two calls of a linear program. The algorithm is shown in Algorithm \ref{alg:solveEpsNE}.

\begin{algorithm}
	\caption{Algorithm for computing $\varepsilon$-Nash equilibrium of the game $(m,n,A,B)$.}
	\label{alg:solveEpsNE}
	\begin{algorithmic}[1] 
		\Procedure{ApproximateNE}{$m,n,A,B$}
		\State Solve \ref{prog:LP1} to get $D_1, D_2,\vec u,\vec v, \tilde \varepsilon$\;
		\State $R \gets AD_1-\vec{1}_m\vec{u}^\transpose$, $C \gets D_2 B-\vec{v}\vec{1}_n^\transpose$, $E \gets R+C$\;
		\State Calculate saddle point equilibrium $(\tilde {\vec p}^*, \tilde {\vec q}^*)$ of $(m, n, R-\frac{1}{2}E,C-\frac{1}{2}E)$\;
		\State 	Set $\vec p^* \gets \frac{D_2 \tilde{\vec p}^*}{\vec{1}_m^\transpose(D_2\tilde{\vec p}^*)}$, $\vec q^* \gets \frac{D_1\tilde{\vec q}^*}{\vec{1}_n^\transpose(D_1 \tilde{\vec q}^*)}$, $\varepsilon \gets \tilde \varepsilon/\max\{\|D_1\|_{max}, \|D_2\|_{max}\}$
		\EndProcedure
	\end{algorithmic}
\end{algorithm}

\subsection{Numerical Simulation}\label{sub:simulation}
To evaluate the performance of Algorithm \ref{alg:solveEpsNE}, we ran an experiment to evaluate the theoretical and actual error $\varepsilon$. We generated games with square payoff matrices with $n\in[5, 50]$ and uniformly distributed payoff values. We generated 10,000 different pairs of payoff matrices on each value of $n$.

For each game $(m, n, A, B)$, we first ran Algorithm \ref{alg:solveEpsNE} to get an approximate WSNE $(\vec {\tilde p}, \vec {\tilde q})$. Then, we ran Lemke-Howson Algorithm to compute the exact Nash equilibrium $(\vec p, \vec q)$. Finally, we computed the exact and theoretical $\varepsilon$. For each game size, we calculated the mean and variance of $\varepsilon$. In all cases tested, the actual error of $(\tilde{\vec p}, \tilde{\vec q})$ is less than the theoretical $\varepsilon$. The results are shown in Figure \ref{fig:epsmeanandvariance}.

\begin{figure}[ht]
  \centering
  \begin{minipage}[b]{0.49\textwidth}
    \includegraphics[width=\textwidth]{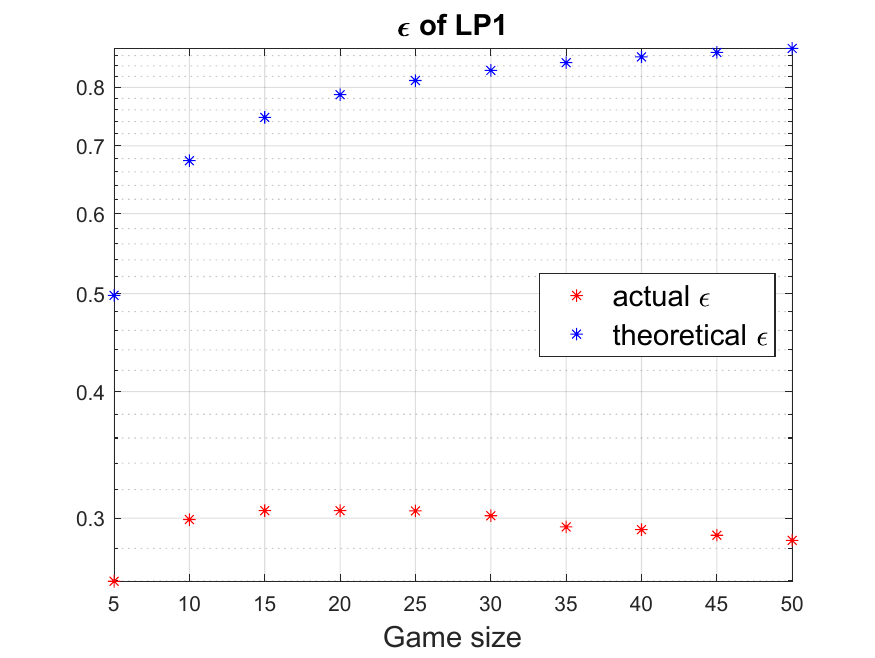}
    \caption{Actual and theoretical $\varepsilon$ with game sizes 5, 10, ..., 50}
  \end{minipage}
  \hfill
  \begin{minipage}[b]{0.49\textwidth}
    \includegraphics[width=\textwidth]{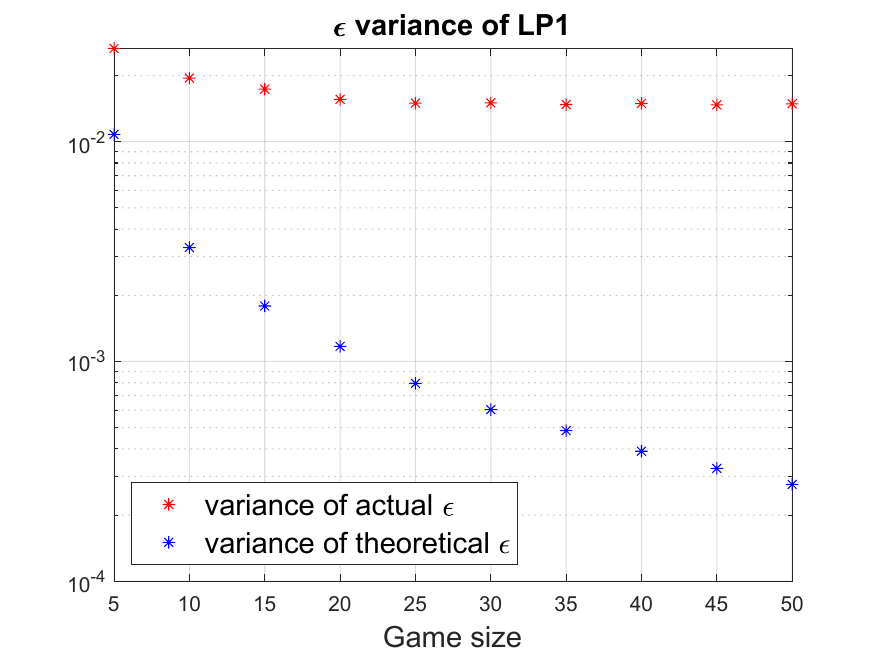}
    \caption{Actual and theoretical variance of $\varepsilon$ with game sizes 5, 10, ..., 50}
  \end{minipage}
  \caption{The results of Algorithm \ref{alg:solveEpsNE}.}\label{fig:epsmeanandvariance}
\end{figure}

\section{Conclusion}\label{sec:conclusion}
In this paper, we proposed two algorithms to determine a strategically equivalent zero-sum game given a non-zero-sum game. In this process, we proposed a new result on best response bijection, by which we can compute the NE of a bimatrix game by computation of the fixed point of the transformed function. The algorithms to determine the strategically equivalent zero-sum games run in polynomial time in the size of the game. Consequently, we showed that a class of non-zero sum games can be solved in polynomial time. 

In the cases where our second algorithm does not output a strategically equivalent zero-sum game, we show that it is a new algorithm for computing an $\varepsilon$-WSNE in polynomial time. Finally, we conducted numerical studies to show the efficacy of our algorithms.

\appendix
\section{Some Auxiliary Results on $\mathcal M_{m\times n}(\Re)$}\label{sec:auxillary}
This appendix is based on Appendix A.1.1 of the second author's PhD dissertation \cite{heyman2019computation}.
\begin{theorem}\label{thm:rankreduction}
For matrix $M \in \mathbb R^{m\times n}$ with $\rank{M} = r$. Let $M_1 = M$, and \begin{align}\label{eq:rankReductionRecursion}
    M_{k+1} := M_k - w_k^{-1}M_k\vec x_k \vec y_k^\transpose M_k, \qquad k\in \{1,\dots,r\}.
\end{align}
Define rank 1 matrices $W_k := \vec v_k \vec u_k^\transpose  = w_k^{-1}M_k\vec x_k \vec y_k^\transpose M_k$. Then, for any $j > k$, $\vec v_k \not \in \colspan{M_j}$ and $\vec u_k \not \in \colspan{M_j^\transpose}$.
\end{theorem}
\begin{proof}
By (\ref{eq:rankReductionRecursion}), we have $M_{k+1} = M - \sum_{i=1}^{k}W_i = \sum_{i = k+1}^r W_i$. Then,  \cite[p. 69]{wedderburn1934lectures} implies $\rank{M_{k+1}} = \rank{M_k} - 1$. Hence the result is proved.
\end{proof}

\begin{theorem}\label{thm:wznonempty}
If a matrix $A \in \mathbb R^{m\times n}$ and $A \not \in \mathcal M_{m\times n}(\mathbb R)$, then $\mathcal{WZ}(A) \neq \emptyset$
\end{theorem}
\begin{proof}
We can write $A = M + \sum_{i=1}^k \vec v_i \vec u_i^\transpose$, where \begin{enumerate}
    \item[(a)] $k = \rank{A} - \rank{M}$,
    \item[(b)] $M = \vec 1_m\vec u_o^\transpose + \vec v_0 \vec 1_n^\transpose \in \mathcal M_{m\times n}(\mathbb R)$, so $\rank{M} \leq 2$,
    \item[(c)] For any $i \in \{1, \dots, k\}$, $\vec v_i \not \in \colspan M$ and $\vec u_i \not \in \colspan {M^\transpose}$,
    \item[(d)] $\left \{\vec v_i\right \}_{i = 1}^k$ and $\left \{\vec u_i\right \}_{i = 1}^k$ are linearly independent.
\end{enumerate} 
Let $\vec w_0 = \vec 1_m$ and $\vec z_0 = \vec 1_n$, and construct orthogonal vectors such that
\begin{align*}
    &\vec w_i = \vec v_i - \sum_{j=0}^i \frac{\vec v_i^\transpose\vec w_j}{\vec w_j^\transpose \vec w_j}\vec w_j  &\forall i \in \{1,\dots,k\}\\
    &\vec z_i = \vec u_i - \sum_{j=0}^i \frac{\vec u_i^\transpose\vec z_j}{\vec z_j^\transpose \vec z_j}\vec z_j  &\forall i \in \{1,\dots,k\}\\
    &\vec 1_m^\transpose \vec w_i = \vec 1_n^\transpose \vec z_i = 0  &\forall i \in \{1,\dots,k\} \\
    &\vec w_i^\transpose\vec v_j = \vec u_j^\transpose\vec z_i = 0  &\forall j < i \\
    &\vec w_i^\transpose\vec v_i \neq 0 \text{ and } \vec u_i^\transpose\vec z_i \neq 0 &\forall i \in \{1,\dots,k\}
\end{align*}
After the iteration, we have that 
\begin{align*}
    \vec w_k^\transpose A \vec z_k &= \vec w_k^\transpose M \vec z_k + \vec w_k^\transpose \sum_{i=1}^{k} \vec v_i \vec u_i^\transpose \vec z_k = \vec w_k^\transpose M \vec z_k + \vec w_k^\transpose \sum_{i=1}^{k-1} \vec v_i \vec u_i^\transpose \vec z_k + \vec w_k^\transpose\vec v_k \vec u_k^\transpose \vec z_k \\
    &= \vec w_k^\transpose\vec v_k \vec u_k^\transpose \vec z_k \neq 0.
\end{align*}
The last step is because $ \vec w_i^\transpose\vec v_j = \vec u_j^\transpose\vec z_i = 0  $ for any $j < i$.
\end{proof}
\begin{lemma}\label{lem:M}
For any matrix $M\in\mc{M}_{m\times n}(\Re)$:
\begin{enumerate}
    \item If $\rank{M}=2$, then $\vec{1}_m\in\colspan{M}$ and $\vec{1}_n\in\colspan{M^\transpose}$. In addition, for all $\vec{x},\vec{y}$ such that $M\vec{x}=\vec{1}_m$,$M^\transpose \vec{y}=\vec{1}_n$,$\vec{1}_n^\transpose \vec{x}=0$,$\vec{1}_m^\transpose \vec{y}=0$.\label{itm:lem:Mr2}
    \item If $\rank{M}=1$, then either $\vec{1}_m\in\colspan{M}$, or $\vec{1}_n\in\colspan{M^\transpose}$ or both $\vec{1}_m\in\colspan{M}$ and $\vec{1}_n\in\colspan{M^\transpose}$.\label{itm:lem:Mr1}
\end{enumerate}
\end{lemma}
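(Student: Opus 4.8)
The plan is to work directly from the defining representation $M=\vec{1}_m\vec{u}^\transpose+\vec{v}\vec{1}_n^\transpose$ and to control the column space and the row space of $M$ at the same time. The single observation that drives everything is that for any $\vec{x}\in\Re^n$ we have $M\vec{x}=(\vec{u}^\transpose\vec{x})\vec{1}_m+(\vec{1}_n^\transpose\vec{x})\vec{v}$, so that $\colspan{M}\subseteq\mathrm{span}\{\vec{1}_m,\vec{v}\}$; transposing, $M^\transpose=\vec{u}\vec{1}_m^\transpose+\vec{1}_n\vec{v}^\transpose$ gives $\colspan{M^\transpose}\subseteq\mathrm{span}\{\vec{u},\vec{1}_n\}$. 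Thus the all-ones vectors are a priori natural candidates to lie inside these two spaces, and the two parts of the lemma amount to pinning down exactly when this occurs as a function of $\rank{M}$.

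For Part 1, suppose $\rank{M}=2$. Then $\colspan{M}$ is two-dimensional but is contained in $\mathrm{span}\{\vec{1}_m,\vec{v}\}$, which has dimension at most $2$; hence the containment is an equality, $\{\vec{1}_m,\vec{v}\}$ is linearly independent, and in particular $\vec{1}_m\in\colspan{M}$. The identical argument applied to $M^\transpose$ yields that $\{\vec{u},\vec{1}_n\}$ is linearly independent and $\vec{1}_n\in\colspan{M^\transpose}$. For the coordinate conditions I would invoke uniqueness of coordinates in a basis: if $M\vec{x}=\vec{1}_m$, then $(\vec{u}^\transpose\vec{x})\vec{1}_m+(\vec{1}_n^\transpose\vec{x})\vec{v}=\vec{1}_m=1\cdot\vec{1}_m+0\cdot\vec{v}$, and since $\{\vec{1}_m,\vec{v}\}$ is a basis of $\colspan{M}$ the coefficients must match, giving $\vec{1}_n^\transpose\vec{x}=0$. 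Symmetrically, $M^\transpose\vec{y}=\vec{1}_n$ expands as $(\vec{1}_m^\transpose\vec{y})\vec{u}+(\vec{v}^\transpose\vec{y})\vec{1}_n=\vec{1}_n$, and independence of $\{\vec{u},\vec{1}_n\}$ forces $\vec{1}_m^\transpose\vec{y}=0$.

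For Part 2, suppose $\rank{M}=1$, so $M\neq\vec{0}_{m\times n}$ and $\colspan{M}$ is one-dimensional. I would split on whether $\{\vec{1}_m,\vec{v}\}$ is dependent. If it is dependent then $\vec{v}=\mu\vec{1}_m$ for some $\mu\in\Re$ (recalling $\vec{1}_m\neq\vec{0}_m$), whence $M=\vec{1}_m(\vec{u}+\mu\vec{1}_n)^\transpose$ and $\colspan{M}=\mathrm{span}\{\vec{1}_m\}$, so $\vec{1}_m\in\colspan{M}$. If instead $\{\vec{1}_m,\vec{v}\}$ is independent, I examine the columns $M^{(j)}=u_j\vec{1}_m+\vec{v}$. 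Should two of them use different weights, say $u_j\neq u_k$, then $M^{(j)}-M^{(k)}=(u_j-u_k)\vec{1}_m$ is a nonzero element of the subspace $\colspan{M}$, giving $\vec{1}_m\in\colspan{M}$; otherwise all $u_j$ coincide, $\vec{u}=c\vec{1}_n$, and $M=(c\vec{1}_m+\vec{v})\vec{1}_n^\transpose$, so $\colspan{M^\transpose}=\mathrm{span}\{\vec{1}_n\}$ and $\vec{1}_n\in\colspan{M^\transpose}$. In every branch at least one of the two conclusions holds, which is what Part 2 asserts.

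The genuinely delicate point is the rank-$1$ case: the containments alone no longer force either all-ones vector into the relevant space, and one must rule out the scenario in which $\colspan{M}$ is a one-dimensional line inside the two-dimensional $\mathrm{span}\{\vec{1}_m,\vec{v}\}$ that avoids $\vec{1}_m$. The column-difference computation above is exactly what closes this gap, showing that such a line can arise only when $\vec{u}$ is constant, and that this is precisely the configuration sending $\vec{1}_n$ into $\colspan{M^\transpose}$ instead. Part 1 is otherwise routine once the two containment relations are in hand.
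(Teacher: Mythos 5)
Your proof is correct and follows essentially the same route as the paper: both work from the representation $M=\vec{1}_m\vec{u}^\transpose+\vec{v}\vec{1}_n^\transpose$, use a dimension count to conclude that $\{\vec{1}_m,\vec{v}\}$ (resp.\ $\{\vec{u},\vec{1}_n\}$) is a basis of $\colspan{M}$ (resp.\ $\colspan{M^\transpose}$) in the rank-$2$ case, and extract the conditions $\vec{1}_n^\transpose\vec{x}=0$, $\vec{1}_m^\transpose\vec{y}=0$ from linear independence of those pairs. The only difference is in the rank-$1$ case, where the paper simply cites its rank-factorization fact while you supply the explicit case analysis (on whether $\vec{v}$ is a multiple of $\vec{1}_m$ and whether $\vec{u}$ is constant) that the citation glosses over --- a worthwhile addition, since that fact only asserts the existence of \emph{some} rank-one decomposition with the basis property, not that the given decomposition inherits it.
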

\begin{proof}
For claim \ref{itm:lem:Mr2}, $\vec{1}_m\in\colspan{M}$ and $\vec{1}_n\in\colspan{M^\transpose}$ follows directly from $\rank{M}=2$. In addition, $M\in\mc{M}_{m\times n}(\Re)$ and $\rank{M}=2$ implies that there exists $\vec{v}\neq\vec{0}_n$ and $\vec{u}\neq\vec{0}_m$ such that $M=\vct{1}_m\vct{u}^\transpose+\vct{v}\vct{1}_n^\transpose$. Also since $\rank{M}=2$, we have that for all $a\in\Re$, $\vec{v}\neq a\vec{1}_m$ since $\vec{v}$ and $\vec{1}_m$ must be linearly independent. Then, for all $\vec{x}$ such that $M\vec{x}=\vec{1}_m$ we have that:
\begin{align*}
    M\vec{x}&=\vec{1}_m\vec{u}^\transpose \vec{x}+\vec{v}\vec{1}_n^\transpose\vec{x}=\vec{1}_m=(\vec{u}^\transpose \vec{x})\vec{1}_m+(\vec{1}_n^\transpose\vec{x})\vec{v}=\vec{1}_m.
\end{align*}
This implies $(\vec{1}_n^\transpose\vec{x})\vec{v}=(1-\vec{u}^\transpose \vec{x})\vec{1}_m$. Further, $\vec{v}\neq a\vec{1}_m$ implies that the equation above is satisfied if and only if $\vec{1}_n^\transpose\vec{x}=0$ and $\vec{u}^\transpose \vec{x}=1$. To prove that for all $\vec{y}$ such that $M^\transpose \vec{y}=\vec{1}_n$, $\vec{1}_m^\transpose \vec{y}=0$ apply the same technique to $M^\transpose$.

Claim \ref{itm:lem:Mr1} follows directly from $\rank{M}=1$.
\end{proof}

\begin{lemma}\label{lem:FinMis0}
For any matrix $F\in\Re^{m\times n}$, if there exists $i,j$ such that $F_{(i)}=\vec{0}_n^\transpose$ and $F^{(j)}=\vec{0}_m$ then $F\in\mc{M}_{m\times n}(\Re)$ if and only if $F=\vec{0}_{m\times n}$.
\end{lemma}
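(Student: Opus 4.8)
The plan is to prove the two directions of the biconditional separately, with the forward implication being immediate and the reverse implication carrying all of the content. Throughout I would work at the level of individual entries, which is where the structure $M=\vec{1}_m\vec{u}^\transpose+\vec{v}\vec{1}_n^\transpose$ becomes transparent.

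For the ``if'' direction, observe that $\vec{0}_{m\times n}=\vec{1}_m\vec{0}_n^\transpose+\vec{0}_m\vec{1}_n^\transpose$, so the zero matrix lies in $\mc{M}_{m\times n}(\Re)$ (this also follows from Lemma \ref{lem:MsubSpace}), and the zero matrix trivially has a zero row and a zero column. Hence the stated equivalence holds in this direction with nothing to check.

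For the ``only if'' direction, I would assume $F\in\mc{M}_{m\times n}(\Re)$ and write $F=\vec{1}_m\vec{u}^\transpose+\vec{v}\vec{1}_n^\transpose$ for some $\vec{u}\in\Re^n$, $\vec{v}\in\Re^m$, so that the $(k,l)$ entry is $f_{k,l}=u_l+v_k$. The hypothesis then translates into two scalar constraints. The zero row $F_{(i)}=\vec{0}_n^\transpose$ gives $u_l+v_i=0$ for every $l$, which forces $\vec{u}=-v_i\vec{1}_n$, i.e.\ $\vec{u}$ is a constant vector. Symmetrically, the zero column $F^{(j)}=\vec{0}_m$ gives $u_j+v_k=0$ for every $k$, forcing $\vec{v}=-u_j\vec{1}_m$, i.e.\ $\vec{v}$ is constant as well.

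The final step is to reconcile these two constants using the entry lying in both the zero row and the zero column. Evaluating the row constraint at $l=j$ yields $u_j=-v_i$; substituting back then gives $v_k=-u_j=v_i$ for all $k$ and $u_l=-v_i$ for all $l$, whence $f_{k,l}=u_l+v_k=-v_i+v_i=0$ for every $(k,l)$, that is, $F=\vec{0}_{m\times n}$. I do not expect any genuine obstacle here; the only point requiring care is precisely this reconciliation at the overlap entry $(i,j)$, which guarantees that the two constant values $-v_i$ and $-u_j$ are negatives of one another and therefore cancel uniformly across the matrix. (Alternatively, one could invoke the decomposition $\bar F$ of Theorem \ref{thm:testFinM}, but the direct entrywise argument is shorter and self-contained.)
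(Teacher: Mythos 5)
Your proof is correct, but it takes a different route from the paper's. You argue entrywise: writing $F=\vec{1}_m\vec{u}^\transpose+\vec{v}\vec{1}_n^\transpose$ so that $f_{k,l}=u_l+v_k$, you show the zero row forces $\vec{u}$ to be the constant vector $-v_i\vec{1}_n$, the zero column forces $\vec{v}=-u_j\vec{1}_m$, and the overlap entry $(i,j)$ reconciles the two constants so that every entry cancels. The paper instead argues via rank: a zero row means every column of $F$ vanishes in position $i$, so $\vec{1}_m\notin\colspan{F}$, and symmetrically $\vec{1}_n\notin\colspan{F^\transpose}$; Lemma \ref{lem:M} then rules out $\rank{F}=1$ and $\rank{F}=2$ for a matrix in $\mc{M}_{m\times n}(\Re)$, leaving $\rank{F}=0$. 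Your argument is more elementary and self-contained --- it needs nothing beyond the definition of $\mc{M}_{m\times n}(\Re)$ --- whereas the paper's is shorter given that Lemma \ref{lem:M} and Fact \ref{fact:rankFactorization} are already in place, and it keeps the reasoning at the level of the subspace structure ($\vec{1}_m$ and $\vec{1}_n$ in the column spans) that the surrounding results also exploit. Either proof is acceptable; there is no gap in yours.
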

\begin{proof}
Clearly $F=\vec{0}_{m\times n}$ implies that $F\in\mc{M}_{m\times n}(\Re)$ and that for all $i,j$ $F_{(i)}=\vec{0}_n^\transpose$ and $F^{(j)}=\vec{0}_m$. Now, consider the forward direction and suppose that $F\in\mc{M}_{m\times n}(\Re)$. Then from the definition of $\mc{M}_{m\times n}(\Re)$, we have that $\rank{F}\leq2$.  We will show that $\rank{F}=0$. $F_{(i)}=\vec{0}_n^\transpose$ implies that $\vec{1}_m\notin\colspan{M}$ and $F^{(j)}=\vec{0}_m$ implies that $\vec{1}_n\notin\colspan{M^\transpose}$. Then, by Lemma \ref{lem:M} $\rank{F}\neq 1,2$. Therefore, $\rank{F}=0$ and $F=\vec{0}_{m\times n}$.  
\end{proof}
\bibliography{ganesec2021_v3}
\bibliographystyle{splncs04}
\end{document}